\documentclass[12pt,reqno]{amsart}

\usepackage{amsmath,amsfonts,amsbsy,amsthm} %
\usepackage{eucal}
\usepackage{dsfont}

\usepackage{enumerate}
\usepackage{enumitem}
\setlist{leftmargin=*}


\numberwithin{equation}{section}

\makeatletter
\newtheoremstyle{corsivo}
   {\medskipamount}{\medskipamount}%
   {\itshape}{}%
   {\bfseries}{}%
   { }
   {\thmname{#1}\thmnumber{\@ifnotempty{#1}{ }\@upn{#2}}%
    \thmnote{ {\bfseries(#3)}}.}%
\makeatother

\theoremstyle{plain}
\newtheorem{thm}{Theorem}
\newtheorem{lemma}[thm]{Lemma}

\makeatletter
\newtheoremstyle{dritto}
   {\medskipamount}{\medskipamount}%
   {\rmfamily}{}%
   {\bfseries}{}%
   { }
   {\thmname{#1}\thmnumber{\@ifnotempty{#1}{ }\@upn{#2}}%
    \thmnote{ {\bfseries(#3)}}.}%
\makeatother

\theoremstyle{dritto}
\newtheorem{dfn}{Definition} 

\newtheorem{assumption}{Assumption}    


\newcommand{\sub}[1]{_{\mathrm{#1}}}


\newcommand{\eps}{\varepsilon}
\newcommand{\epsi}{\varepsilon}



\newcommand{\eu}{\mathrm{e}}
\newcommand{\iu}{\mathrm{i}}   



\newcommand{\N}{\mathbb{N}}

\newcommand{\R}{\mathbb{R}}
\newcommand{\C}{\mathbb{C}}


\newcommand{\Do}{\mathcal{D}}

\newcommand{\Hi}{\mathcal{H}}






\newcommand{\bra}[1]{\left\langle #1 \right|}
\newcommand{\ket}[1]{\left| #1 \right\rangle}

\newcommand{\set}[1]{\left\lbrace#1\right\rbrace}    





\newcommand{\ie}{{\sl i.\,e.\ }}   
\newcommand{\eg}{{\sl e.\,g.\ }} 
\newcommand{\etal}{{\sl et al.\ }}  


\newcommand{\virg}[1]{``#1''}
\newcommand{\crucial}[1]{{\textbf{#1}}}    

\renewcommand{\(}{\left(}
\renewcommand{\)}{\right)}

\renewcommand{\endrmk}{\hfill $\diamond$}


\usepackage{color}







\let\oldfootnote\footnote
\renewcommand{\footnote}[1]{\oldfootnote{\  #1}}

%
%
%


%
\usepackage{amssymb}
\usepackage{bbm}

\usepackage{stackrel}
\usepackage{braket}
\usepackage{hyperref}
\usepackage[title,titletoc,toc]{appendix}
\usepackage{tikz}

\newcommand{\1}{\mathbbm{1}}

\newcommand{\fhalf}{{\frac{1}{2}}}

\newcommand{\bm}[1]{\boldsymbol{#1}}
\newcommand{\Sch}{Schr\"odinger }

\newcommand{\abs}[1]{\left| #1 \right|}
\newcommand{\norm}[1]{\left\| #1 \right\|}
\newcommand{\inner}[2]{\left\langle  #1 \, , \,  #2 \right\rangle}

\newcommand{\upk}[1]{\ket{#1}\otimes e_{1}}
\newcommand{\dwnk}[1]{\ket{#1}\otimes e_{-1}}

\theoremstyle{plain}

\title[Controllability of the Jaynes-Cummings dynamics]{Approximate controllability of \\[1mm] the Jaynes-Cummings dynamics}
\author[L. Pinna and G. Panati]{Lorenzo Pinna and Gianluca Panati}

\date{December 15, 2017. Version submitted to \textsl{arXive.org}}


\begin{document}
\maketitle
\begin{abstract}
We investigate the controllability of the Jaynes-Cum\-mings dynamics in the resonant and nearly resonant regime. 
We analyze two different types of control operators acting on the bosonic part, corresponding 
-- in the application to cavity QED -- to an external electric and magnetic field, respectively. \newline
\noindent We prove approximate controllability for these models, for all values of the coupling constant $g \in \R$ except those in a countable set $S_*$ which is explicitly characterized in the statement. The proof relies on a spectral analysis which yields the non-resonance of the spectrum for every $g \in \R \setminus S_*$. 

\vspace{1mm}

\noindent \textsc{Keywords:} Quantum Control Theory, \Sch equation,  spin-boson models, Jaynes-Cum\-mings model, Rabi model, cavity QED. 
\end{abstract}

\section{Introduction}
\label{Sec:Intro}

Spin-boson models, which describe the interaction between a $2$-level quantum system and finitely many distinguished modes of a bosonic field, are ubiquitous in Quantum Theory. They play a prominent role in quantum optics, magnetic resonance theory and in cavity Quantum Electro Dynamics (QED). In the latter context, they provide an approximate yet accurate description of the dynamics of a $2$-level atom in a resonant microwave cavity, as in recent experiments \cite{BRH, HaRa}. Among the spin-boson models, two prototypical examples are the Rabi model  \cite{Rab1,Rab2}  and the Jaynes-Cummings model \cite{JC}, which despite their age are still very popular in several fields. 


More recently, these models attracted the attention of researchers in mathematical control theory.  In a nutshell, the controllability 
problem is the following. An initial state $\Psi\sub{in}$ and a target state $\Psi\sub{fin}$ are given, as well as the unperturbed Hamiltonian operator $H\sub{up}$ and the control operator $H\sub{con}$. Then, one investigates whether there exist a number $T >0$ and a reasonably regular (\eg piecewise constant) function $u: [0,T] \rightarrow \R$ such that the dynamics generated by the Hamiltonian operator
$$
H(t) := H\sub{up} + u(t) H\sub{con}
$$
drives the state $\Psi(0) = \Psi\sub{in}$ so that $\Psi(T)$ is (in an $\epsi$-neighbourod of) the state $\Psi\sub{fin}$. 
If the latter holds true for every normalized $\Psi\sub{in}$, $\Psi\sub{fin}$ (and every $\epsi > 0$) the system is called 
\crucial{(approximately) controllable}  (see Definition \ref{Def:Controllability} for details). 
It has been noticed that, whenever the Hilbert space is infinite dimensional, controllability is generally false, hence one focuses on approximate controllability \cite{BMS,Tur}.

In the field of quantum control, a crucial role is played by the competition between symmetry and controllability.  
In general, symmetries are an obstacle to controllability, because they imply the existence of invariant subspaces for the  dynamics. Therefore, the external control operator must necessarily break all the symmetries of the unperturbed Hamiltonian in order to achieve approximate controllability. There are highly symmetric systems that are not approximately controllable, 
as proved by Mirrahimi and Rouchon for the $1$-dimensional harmonic oscillator \cite{uncontr}. 
On the other hand, approximate controllability has been proved for trapped ion models \cite{EbLaw, ErvPuel, PaduroSigalotti} and, more recently, for the Rabi Hamiltonian \cite{BMPS}. 
So, one may wonder whether a model which is somehow intermediate between the harmonic oscillator and the Rabi Hamiltonian is still approximately controllable.  
In particular the Jaynes-Cummings (JC) model, despite its superficial similarity with the Rabi model, has an additional symmetry  with respect to the latter, corresponding to the conservation of the total number of excitations.  Hence, its controllability is an interesting matter in control theory, and the question whether the JC dynamics is controllable has been considered
by Rouchon some years ago  \cite[Section 4.2]{Rou}. 


In this paper, we answer the previous question. We consider two different types of control operator, which in the application to cavity QED correspond, respectively, to an external magnetic and an external electric field.  (For the connection between the JC model and the standard model of non-relativistic QED, namely the {\it Pauli-Fierz model}, we refer to  \eg \cite[Section I.A]{BMPS} and references therein). 
In Theorem \ref{short} we prove that the Jaynes-Cummings dynamics is controllable for every value of the coupling constant with the exception of a set $S_*$ of measure zero. Then, in Theorem \ref{long} we characterize the values in $S_*$ as solutions to some explicit equations. The proof exploits three technical ingredients: the integrability of the JC model \cite{JC}; a controllability criterion proved by Boscain \etal \cite{BCCS}, which is based on the study of the resonances of the spectrum; a detailed analysis of the resonance condition.

As far as future perspectives are concerned, an interesting task is to provide a constructive control method for the Jaynes-Cummings and the Rabi dynamics. Notice that in this paper we make an explicit  construction of a non-resonant chain of connectedness (see Definition \ref{connchain}). However, as far as we know, this fact implies the approximate controllability of the system only via a general theorem \cite{BCCS} whose proof is not constructive. Finally, we mention a related interesting problem. It is known that the Jaynes-Cummings dynamics can be heuristically seen as an approximation of the Rabi dynamics, in an appropriate regime, as discussed in detail in \cite[Section 4.2]{Rou}. A rigorous mathematical proof of the latter claim, which is still missing in the literature, would provide a deeper understanding of both the models and their dynamics.  

\medskip

\noindent {\bf Acknowledgments.}  We are grateful to U.\,Boscain, M.\,Sigalotti,  P.\,Ma\-son,  and S.\,Teufel 
for stimulating and useful discussions on related topics. We also thank M.\,Moscolari for a careful reading of the manuscript.


\section{The Jaynes-Cummings model}
\subsection{Definition of the model}
In the Hilbert space $\Hi=L^2(\R)\otimes\C^2$ we consider the \Sch equation
\[ \iu \hbar \, \partial_t\psi=H_{\rm JC}\psi \]
with Hamiltonian operator  (JC Hamiltonian)
\begin{equation}\label{JCH}
H_{\rm JC}\equiv H_{\rm JC}(g)=\frac{\hbar\omega}{2}(X^2+P^2)\otimes\1+\frac{\hbar\Omega}{2}\1\otimes\sigma_z+\frac{\hbar g}{\sqrt{2}}(X\otimes\sigma_x-P\otimes\sigma_y)
\end{equation}
where $\omega,\Omega\in\R_+$ and $g\in\R$ are constants, $X$ is the position operator, \ie $X\psi(x)=x\psi(x)$, and $P=-\iu\partial_x$. The operators $\sigma_x,\sigma_y,\sigma_z$ acting on $\C^2$ are given by the Pauli matrices 
\[ \sigma_x=\left(\begin{array}{cc}0&1 \\ 1&0\end{array}\right)\quad
   \sigma_y=\left(\begin{array}{cc}0&-\iu \\ \iu&0\end{array}\right)\quad
   \sigma_z=\left(\begin{array}{cc}1&0 \\ 0&-1\end{array}\right).
\]
The quantity $\Delta:=\Omega-\omega$ is called \emph{detuning} and measures the difference between the energy quanta of the two subsystems corresponding to the factorization of the Hilbert space.

By introducing the \emph{creation} and \emph{annihilation} operators for the harmonic oscillator, defined as usual by 
\begin{equation}\label{adagga}
a^\dagger=\frac{1}{\sqrt{2}}(X-\iu P)\qquad a=\frac{1}{\sqrt{2}}(X+\iu P),
\end{equation}
and the \emph{lowering} and \emph{raising} operators 
\begin{equation}\label{low_raise}
\sigma=\frac{1}{2}(\sigma_x-\iu\sigma_y)=\left(\begin{array}{cc}0&0\\1&0\end{array}\right)\quad\sigma^\dagger=\frac{1}{2}(\sigma_x+\iu\sigma_y)=\left(\begin{array}{cc}0&1\\0&0\end{array}\right), 
\end{equation}
the JC Hamiltonian (omitting tensors) reads
\[ H_{\rm JC}=\hbar\omega \(a^\dagger a +\fhalf\)+\frac{\hbar\Omega}{2}\sigma_z
+ \frac{\hbar g}{2} \(a\sigma^\dagger+a^\dagger\sigma \). \]

The popularity of this model relies on the fact that it is presumably the simplest model describing a two-level system interacting with a distinguished mode of a quantized bosonic field (the harmonic oscillator). 
It was introduced by Jaynes and Cummings in 1963 as an approximation to the Rabi Hamitonian
\begin{equation}\label{RaH}
H_{\rm R}=H_{\rm R}(g)=\hbar\omega \(a^\dagger a +\fhalf \)+\frac{\hbar\Omega}{2}\sigma_z+\frac{\hbar g}{2}(a+a^\dagger)
(\sigma+\sigma^\dagger).
\end{equation}
The latter traces back to the early works of Rabi on spin-boson interactions  \cite{Rab1,Rab2}, while in \cite{JC} Jaynes and Cummings derived both \eqref{JCH} and \eqref{RaH} from a more fundamental model of non-relativistic Quantum Electro Dynamics (QED).

Nowadays, both Hamiltonians \eqref{JCH} and \eqref{RaH} are widely used in several fields of physics. Among them, one of the most interesting is cavity QED. In typical cavity QED experiments, atoms move across a cavity that stores a mode of a quantized electromagnetic field. During their passage in the cavity the atoms interact with the field: the Hamiltonians \eqref{JCH} and \eqref{RaH} aim to describe the interaction between the atom and the cavity, in different regimes \cite{BRH, HaRa}.  
More precisely, \eqref{JCH} and \eqref{RaH} can be heuristically derived from a mathematical model of non-relativistic QED, the Pauli-Fierz model \cite{Spohn}; we refer to \cite{Coh1} and the more recent \cite{BMPS} for a discussion of this derivation. 
 
The approximation consisting in replacing \eqref{RaH} with \eqref{JCH} is commonly known as  the \emph{rotating wave approximation} (or \emph{secular approximation}), and is valid under the assumptions \cite{Rou}
\begin{equation}
\abs{\Delta}\ll\omega,\Omega\qquad g\ll\omega,\Omega
\end{equation}
which mean that the harmonic oscillator and the two-level system are almost in resonance and the coupling strength is small compared to the typical energy scale. Heuristically, in this regime the probability of creating or destroying two excitations is negligible, thus one can remove the so-called \emph{counter-rotating} terms $a^\dagger\sigma^\dagger$ and $\,a\sigma$ in \eqref{RaH} to obtain \eqref{JCH}. 
More precisely, the justification of this approximation relies on separation of time scales, a well-know phenomenon in several areas of physics \cite{PST1,PST2,PSparberT}. Indeed, by rewriting the dynamics generated by \eqref{RaH} in the interaction picture with respect to 
\begin{equation} \label{H_0}
H_0:=H_{\rm JC}(0)=H_{\rm R}(0)= \hbar \omega \(a^\dagger a + \fhalf\) +\frac{\hbar \Omega}{2}\sigma_z ,
\end{equation}
one gets
\begin{align}
\nonumber e^{\iu H_0t/\hbar}(H_{\rm R} - H_0)e^{-\iu H_0t/\hbar} 
&=\frac{g}{2} \(e^{-\iu (\Omega-\omega)t}a^\dagger \sigma + e^{\iu (\Omega-\omega)t}a\sigma^\dagger \) \\
& +\frac{g}{2} \(e^{-\iu (\Omega+\omega)t}a\sigma+e^{\iu (\Omega+\omega)t}a^\dagger\sigma^\dagger \). 
\end{align}
One notices that the terms $a^\dagger\sigma,\,a\sigma^\dagger$ oscillate with frequency $\abs{\omega-\Omega}$, 
while $a^\dagger\sigma^\dagger,\,a\sigma$ oscillate on the faster scale $\omega+\Omega$, so that 
the latter average to zero on the long time scale $\abs{\omega-\Omega}^{-1}$. 
While the physical principles leading from \eqref{RaH} to \eqref{JCH} are clear, as we mentioned in the Introduction 
a rigorous mathematical justification for this approximation seems absent from the literature, as recently remarked in \cite{Rou}. 

\vspace{1mm}

We use hereafter Hartree units, so that in particular $\hbar=1$. 

\subsection{Spectrum of the JC Hamiltonian}
While apparently similar, the JC Hamiltonian \eqref{JCH} and the Rabi Hamiltonian \eqref{RaH} are considerably different from the viewpoint of symmetries. 

As operators, they are both infinitesimally small perturbation, in the sense of Kato \cite{Ka}, of the free Hamiltonian $H_0$ (defined in \eqref{H_0}), which has compact resolvent. Eigenvalues and eigenvectors of $H_0$ are easily obtained by tensorization, starting from the eigenvectors $\{e_1,\,e_{-1}\}$ of $\sigma_z$ and the standard basis of $L^2(\R)$ given by  real eigenfunctions of $a^\dagger a$, namely the Hermite functions
\[ \ket{n}=\frac{1}{\sqrt{2^{n}n!\sqrt{\pi}}} \, h_n(x) \, e^{-\frac{x^2}{2}},\qquad n\in\N, \]
where $h_n$ is the $n$-th Hermite polynomial. As well known, they satisfy
\begin{equation}\label{annih_cre}
a^\dagger a\ket{n}=n\ket{n},\qquad a^\dagger\ket{n}=\sqrt{n+1}\ket{n+1},\qquad a\ket{n}=\sqrt{n}\ket{n-1}.
\end{equation}
Then 
\[ H_0\upk{n}=E^0_{(n,1)}\upk{n},\quad H_0\dwnk{n}=E^0_{(n,-1)}\dwnk{n} \]
with
\[ E^0_{(n,s)}=\omega(n+\fhalf)+s\frac{\Omega}{2},\qquad n\in\N,\;s\in\lbrace -1,1\rbrace. \]
Since $(a+a^\dagger)\sigma_x$ and $(a\sigma^\dagger+a^\dagger\sigma)$ are infinitesimally $H_0$-bounded,
by standard perturbation theory $\lbrace H_{\rm JC}(g)\rbrace_{g\in\C}$ and $\lbrace H_{\rm R}(g)\rbrace_{g\in\C}$
are analytic families (of type A) of operators with compact resolvent \cite[Section VII.2]{Ka}. Therefore, by Kato-Rellich theorem, the eingenvalues and eigenvectors of $H_{\rm JC}(g)$ and $H_{\rm R}(g)$ are analytic functions of the parameter $g$. Coefficients of the series expansion of eingenvalues and eigenvectors can be explicitly computed \cite{RS4}.

From the viewpoint of symmetries, it is crucial to notice that, as compared to the Rabi Hamiltonian, the JC Hamiltonian has an additional conserved quantity, namely the total number of excitations, represented by the operator $C=a^\dagger a+\sigma^\dagger\sigma$. 
As a consequence, the JC Hamiltonian reduces to the invariant subspaces
\begin{equation}
\Hi_{n}=\mathrm{Span}\{\upk{n}, \dwnk{n+1}\}\quad n\geq 0,\qquad\Hi_{-1}=\mathrm{Span}\{\dwnk{0}\},
\end{equation}
which are the subspaces corresponding to a fixed number of total excitations, \ie $ C \upharpoonright_{\Hi_n}=n+1$. 
Indeed,  $H_{\rm JC}$ restricted to these subspaces reads
\begin{align}
\nonumber H_n(g)& :=H_{\rm JC}(g)\upharpoonright_{\Hi_n}=\left(
\begin{array}{cc}
E^0_{(n,1)} & g\sqrt{n+1}\\
g\sqrt{n+1} & E^0_{(n+1,-1)}\\
\end{array}\right)\\
&=\omega(n+1)\1+\left(
\begin{array}{cc}
\Delta/2 & g\sqrt{n+1}\\
g\sqrt{n+1} & -\Delta/2\\
\end{array}
\right).
\end{align}
Eigenvalues and eigenvectors of $H_n$ are easily computed to be
\begin{equation}
H_{\rm JC}(g)\ket{n,\nu}=E_{(n,\nu)}\ket{n,\nu},\qquad\qquad n\in\N,\;\nu\in\lbrace -,+\rbrace
\end{equation}
where
\begin{align}
\label{eig} E_{(n,\nu)}(g) & =\omega(n+1)+\nu\frac{1}{2}\sqrt{\Delta^2+4g^2(n+1)}\\
\label{eig_n+}\ket{n,+}(g) & =\cos(\theta_n/2)\upk{n}+\sin(\theta_n/2)\dwnk{n+1}\\
\label{eig_n-}\ket{n,-}(g) & =-\sin(\theta_n/2)\upk{n}+\cos(\theta_n/2)\dwnk{n+1}
\end{align}
and the \emph{mixing angle} $\theta_n(g)\in [-\pi/2,\pi/2]$ is defined through the relation
\begin{equation}\label{theta}
\tan\theta_n:=\frac{2g\sqrt{n+1}}{\omega-\Omega}.
\end{equation}
Hereafter, we will  omit the $g$-dependence of the eigenvectors $\ket{n,\nu}$ for the sake of a lighter notation. 
Observe that in the resonant case, \ie $\Delta=0$, equation \eqref{theta} implies $\abs{\theta_n}=\pi/2$ for every $n\in\N$, hence the eigenvectors $\ket{n,\nu}$ are independent from $g$,  while the eingenvalues still depend on it. 

Moreover, depending on the sign of $\Delta$, one has 
\[ \begin{array}{llr}
E_{(n,+)}(0)=E^0_{(n,1)} & E_{(n,-)}(0)=E^0_{(n+1,-1)}, & \qquad\mbox{for }\Delta>0 \\[2mm]
E_{(n,+)}(0)=E^0_{(n+1,-1)} & E_{(n,-)}(0)=E^0_{(n,1)}, & \qquad\mbox{for }\Delta<0\\[2mm]
E_{(n,\nu)}(0)=E^0_{(n+1,-1)}=E^0_{(n,1)}, & & \qquad\mbox{for }\Delta=0
\end{array}. \]
As we mentioned before, in view of Kato-Rellich theorem, the eigenvalues of $H\sub{JC}(g)$ are analytic in $g$ if a convenient labeling is chosen. The table above shows which function, among $g \mapsto E_{(n,+)}(g)$ and $g \mapsto E_{(n,-)}(g)$, provides the analytic continuations of the spectrum at the points $E^0_{(n,1)}$ or $E^0_{(n+1,-1)}$. 
When $\Delta=0$, in order to have analytic eigenvalues and eigenfunctions we must choose $E_{(n,\nu)}=\omega(n+1)+\nu\sqrt{n+1}g$.

The spurious eigenvector $\dwnk{0}$ with eigenvalue $E^0_{(0,-1)}=\Delta/2$ completes the spectrum of the JCH. Let us define
\[ {\delta\equiv}\delta(\Delta):=\left\lbrace
	\begin{array}{lr}
	+ & \mbox{if }\Delta\geq 0\\
	- & \mbox{if }\Delta<0
	\end{array}.\right.
\]
Throughout the paper we will use the notation $\ket{-1,\delta}:=\dwnk{0}$ and $E_{(-1,\delta)}:=E^0_{(0,-1)}$. We will denote a pair $(n,\nu)$ with a bold letter $\bm{n}$, meaning that the first component of $\bm{n}$ is the same not-bold letter while the second component is the corresponding Greek letter, namely
\[ \bm{n}=(n,\nu),\quad\bm{n}(1)=n,\quad\bm{n}(2)=\nu. \]
Let also us define 
\begin{equation}\label{fns}
f_n(g):=\fhalf\sqrt{\Delta^2+4g^2(n+1)}.
\end{equation}
With this notation, we can write the specrum of the JC Hamiltonian in a synthetic way as
\begin{equation} \label{Eq:Spectrum}
\sigma\Bigl(H_{\rm JC}(g)\Bigr)=\lbrace E_{\bm{n}}\rbrace_{\bm{n}\in\mathcal{N}},\qquad E_{\bm{n}}(g)=\omega(n+1)+\nu f_n(g) 
\end{equation}
where
\begin{equation}\label{indset}
\mathcal{N}:=\left(\N\times\lbrace-,+\rbrace\right)\cup\lbrace(-1,\delta(\Delta))\rbrace.
\end{equation}
Notice that the notation is coherent since 
$$
E_{(-1,\delta)}=\delta(\Delta)f_{-1}(g)=\delta(\Delta)\frac{\abs{\Delta}}{2}=\frac{\Delta}{2}=E^0_{(0,-1)}, 
$$
in agreement with the definition above.
It will be also useful introduce the following sets
\begin{equation}\label{newN}
\mathfrak{N}_\pm:=\N\cup\{\mp\delta(\Delta)1\}
\end{equation}
which are copies of the natural numbers with $\set{-1}$ added to the set with the index $\delta(\Delta)$. 


\goodbreak

\section{General setting and main result}
\subsection{General setting}
We now introduce the controllability problem in a general setting. Let $\Hi$ be a separable Hilbert space 
with Hermitian product $\inner{\cdot}{\cdot}$. 
We consider the equation
\begin{equation}\label{contrsys}
\partial_t\psi=(A+u(t)B)\psi,\quad\psi\in\Hi
\end{equation}
where $A,B$ are skew-adjoint linear operator on $\Hi$ with domain $\Do(A)$ and $\Do(B)$ respectively, $u$ is a function of time with values in $U\subset\R$.
\begin{assumption}\label{ass1}
The system $(A, B, U, \Phi_\mathcal{I})$ is such that:
\begin{enumerate}[label=$(\mathrm{A}_{\arabic*})$,ref=$(\mathrm{A}_{\arabic*})$]
\item \label{item:A1} 
$\Phi_\mathcal{I}=\{\phi_k\}_{k\in\mathcal{I}}$ is a Hilbert basis of eigenvectors for $A$ associated to the eigenvalues $\{\iu\lambda_k\}_{k\in\mathcal{I}}$;
\item \label{item:A2}$\phi_k\in\Do(B)$ for every $k\in\mathcal{I}$;
\item \label{item:A3}$A+wB : \mathrm{Span}_{k\in\mathcal{I}}\{\phi_k\}\rightarrow\Hi$ is essentially skew-adjoint for every $w\in U$;
\item \label{item:A4} if $j\neq k$ and $\lambda_j=\lambda_k$, then $\inner{\phi_j}{B\phi_k}=0$. \endrmk
\end{enumerate} 
\end{assumption}

\goodbreak

Under these assumptions $A+uB$ generates a unitary group $t\mapsto \eu^{(A+uB)t}$ for every constant $u\in U$. Hence,  for every piecewise constant function  $u(t)=\sum_{i=1}^n u_i\chi_{[t_{i-1},t_i]}(t)$ associated to a partition $0=t_0<t_1< \ldots < t_n$, we can define the propagator
\begin{equation}\label{propa}
\Upsilon_t^u:=\eu^{(t-t_j)(A+u_{j+1}B)} \eu^{(t_{j}-t_{j-1})(A+u_{j}B)} \dots \eu^{t_1(A+u_1B)}\quad\mbox{for}\quad t_j<t\leq t_{j+1}.
\end{equation}
The solution to \eqref{contrsys} with initial datum $\psi(0)=\psi_0\in\Hi$ is denoted by $\psi(t)=: \Upsilon_t^u{(\psi_0)}$.
\begin{dfn}\label{Def:Controllability} 
Let $(A, B, U, \Phi_\mathcal{I})$ satisfy Assumption \ref{ass1}. We say that \eqref{contrsys} is \textbf{approximately controllable} if for every $\Psi\sub{in}$ and $\Psi\sub{fin}$  with $\norm{\Psi\sub{in}}=1=\norm{\Psi\sub{fin}}$ and for every $\eps>0$ there exist a finite $T_\eps>0$ and a piecewice constant control function $u : [0,T_\eps]\rightarrow U$ such that 
\[ \norm{\Psi\sub{fin} - \Upsilon_{T_\eps}^u(\Psi\sub{in})}<\eps. \] \endrmk
\end{dfn}
We recall a criterion for approximate controllability on which our proof is based. This general result gives a sufficient condition for approximate controllability based on the spectrum of $A$ and the action of the control operator $B$. More precisely, if $\sigma(A)$ has a sufficiently large number of non-resonant transitions, \ie pairs of levels $(i,j)$ such that their energy difference $\abs{\lambda_i-\lambda_j}$ is not replicated by any other pair, and $B$ is able to activate these transitions, then the system is approximately controllable. This heuristic idea is made precise in the following 

\begin{dfn}\label{connchain}
Let $(A, B, U, \Phi_\mathcal{I})$ satisfy Assumption \ref{ass1}. A subset $\mathcal{C}$ of the set $\mathcal{I} \times \mathcal{I} = \mathcal{I}^2$ \textbf{connects a pair} $(j,k)\in\mathcal{I}^2$, if there exists a finite sequence $c_0,...,c_p$ such that:
\begin{enumerate}[label=$({\roman*})$,ref=$({\roman*})$]
\item[(i)] $c_0=j$ and $c_p=k$;
\item[(ii)] $(c_i,c_{i+1})\in\mathcal{C}$ for every $0\leq i\leq p-1$;
\item[(iii)] $\inner{\phi_{c_i}}{B\phi_{c_{i+1}}}\neq 0$ for every $0\leq i\leq p-1$.
\end{enumerate}
The set $\mathcal{C}$ is called a \textbf{chain of connectedness} for $(A, B, U, \Phi_\mathcal{I})$ if connects every pair in $\mathcal{I}^2$.\\
A \textbf{chain of connectedness} is called \textbf{non-resonant} if for every $(c_1,c_2)\in \mathcal{C}$ holds
\[ \abs{\lambda_{c_1}-\lambda_{c_2}}\neq\abs{\lambda_{t_1}-\lambda_{t_2}} \] 
for every $(t_1,t_2)\in\mathcal{I}^2\setminus\{(c_1,c_2),(c_2,c_1)\}$ such that $\inner{\phi_{t_1}}{B\phi_{t_2}}\neq 0$.
\endrmk
\end{dfn}
Intuitively, if two levels of the spectrum are non-resonant and the control operator $B$ couples them, one can tune the control function $u$ in such a way to arrange arbitrarily the wavefunction's components on these levels, without modifying any other component. Therefore, having a non-resonant chain of connectedness allow us to approximately reach the target state by sequentially modifying the wavefunction. These heuristic idea is crucial for the proof of the following criterion by Boscain \etal
\begin{thm}\textnormal{\cite[Theorem 2.6]{BCCS}}\label{apxthm}
Let $c>0$ and let $(A, B, [0,c], \Phi_\mathcal{I})$ satisfy Assumption \ref{ass1}. If there exists a non-resonant chain of connectedness for $(A, B, [0,c], \Phi_\mathcal{I})$ then the system \eqref{contrsys} is approximately controllable.
\end{thm}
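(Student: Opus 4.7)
The plan is to combine a finite-dimensional Galerkin truncation with a time-averaging argument that exploits the non-resonance of the chain $\mathcal{C}$ to reduce the dynamics to sequential two-level manipulations. First, for each finite subset $\mathcal{I}_N \subset \mathcal{I}$ of an exhaustive increasing sequence, introduce $\Hi_N := \mathrm{Span}\{\phi_k : k \in \mathcal{I}_N\}$ together with the Galerkin operators $A_N := \pi_N A \pi_N$ and $B_N := \pi_N B \pi_N$, where $\pi_N$ is the orthogonal projection onto $\Hi_N$. Under Assumption \ref{ass1}, the propagator $\Upsilon^{u,N}_t$ of the truncated system approximates $\Upsilon^u_t$ strongly, uniformly in $t$ on compact intervals and uniformly over piecewise-constant controls with values in $[0,c]$ and bounded total variation. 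Hence, given $\Psi\sub{in}, \Psi\sub{fin}$ and $\eps > 0$, it suffices to find $N$ and $u$ steering the projection of $\Psi\sub{in}$ to within $\eps/2$ of that of $\Psi\sub{fin}$ in $\Hi_N$.

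For the finite-dimensional step the engine is an averaging argument. Fix a non-resonant pair $(j,k) \in \mathcal{C}$ and set $\omega_{jk} := |\lambda_j - \lambda_k|$. In the interaction picture with respect to $A$ the system becomes
\[
\partial_t \tilde\psi = u(t)\,\eu^{-tA} B \eu^{tA}\,\tilde\psi,
\]
and the matrix elements of $\eu^{-tA} B \eu^{tA}$ oscillate at the transition frequencies $\omega_{st}$ weighted by the couplings $\inner{\phi_s}{B\phi_t}$. Choosing $u(t) = u_0 + \alpha \cos(\omega_{jk} t)$ with small $\alpha$ and running the dynamics over a time $T_\alpha = \mathcal{O}(1/\alpha)$, the non-resonance hypothesis forces all off-resonant contributions, controlled by oscillatory integrals of order $1/|\omega_{jk} - \omega_{st}|$, to be negligible, while the $(j,k)$ term integrates to a genuine rotation in $\mathrm{Span}\{\phi_j,\phi_k\}$. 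Approximating the cosine profile by a piecewise-constant function taking values in $[0,c]$ (a routine Fourier step exploiting that the control set has non-empty interior) then produces an arbitrary unitary rotation on this two-dimensional subspace, up to a perturbation on its orthogonal complement that is controlled at will.

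The last step assembles these two-level rotations into a full controllability result. For any two indices $j,k \in \mathcal{I}_N$ the chain $\mathcal{C}$ provides a path $c_0 = j, c_1, \ldots, c_p = k$ with non-vanishing couplings $\inner{\phi_{c_i}}{B \phi_{c_{i+1}}}$; concatenating the associated two-level rotations transports probability amplitude and adjusts relative phases between any pair of basis vectors, hence, by linearity, between any pair of unit vectors in $\Hi_N$. The main obstacle, and the point where non-resonance is really used, is showing that the effective rotation on each $(c_i, c_{i+1})$ subspace can be realized with error smaller than any prescribed tolerance \emph{uniformly} along the path: this reduces to a quantitative oscillatory-integral estimate in which the minimum gap $\min_{(s,t)} |\omega_{c_i c_{i+1}} - \omega_{st}|$ over the finite set of competing non-vanishing transitions in $\mathcal{I}_N^2$ appears in the denominator, and is strictly positive at fixed $N$ by the non-resonance of $\mathcal{C}$. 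Combining the resulting finite-dimensional error with the Galerkin error from the first step yields approximate controllability in the sense of Definition \ref{Def:Controllability}.
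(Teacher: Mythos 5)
First, a point of orientation: the paper does not prove this statement at all — it is imported verbatim from \cite{BCCS} (Theorem 2.6) and used as a black box, so there is no internal proof to compare your sketch against; what you have written is an attempted reconstruction of that external result, and it contains a genuine gap. The gap is in your first step. Assumption \ref{ass1} only guarantees that $B$ is defined on the eigenvectors $\phi_k$ and that $A+wB$ is essentially skew-adjoint on their span (\ref{item:A2}, \ref{item:A3}); $B$ is not assumed bounded, nor $A$-bounded. Under these hypotheses the claim that the Galerkin propagators $\Upsilon^{u,N}_t$ converge strongly to $\Upsilon^u_t$ \emph{uniformly over all piecewise-constant controls with values in $[0,c]$ of bounded variation} is not a routine fact — it is false in general, and indeed it is well known in this literature that controllability of every Galerkin approximation does not imply approximate controllability of the infinite-dimensional system: the control designed for the truncated system can pump an amount of probability into the discarded modes that is not small uniformly in the control. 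Overcoming exactly this obstruction is the substance of the proof in \cite{BCCS} (and of its predecessor by Chambrion, Mason, Sigalotti and Boscain), which proceeds by a time-reparametrization and convexification argument with averaging estimates carried out directly on the infinite-dimensional dynamics, tracking the leakage out of the finite-dimensional block rather than assuming it away.

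A second, related flaw sits in your quantitative oscillatory-integral step. The non-resonance required in Definition \ref{connchain} is with respect to \emph{all} coupled pairs $(t_1,t_2)\in\mathcal{I}^2$, and the contributions you must suppress when realizing a rotation in $\mathrm{Span}\{\phi_{c_i},\phi_{c_{i+1}}\}$ include the transitions coupling $\Hi_N$ to its orthogonal complement; these are infinitely many, so the quantity $\min_{(s,t)}\abs{\omega_{c_ic_{i+1}}-\omega_{st}}$ you place in the denominator cannot be restricted to the finite set $\mathcal{I}_N^2$ and need not be bounded away from zero (the competing frequencies may accumulate on $\omega_{jk}$). In other words, the statement that the two-level rotation is achieved \virg{up to a perturbation on its orthogonal complement that is controlled at will} is precisely the hard part of the theorem and is asserted rather than proved. (A more minor point: with $u(t)=u_0+\alpha\cos(\omega_{jk}t)$, averaging in the interaction picture retains not only the $(j,k)$ coupling but also $u_0$ times the diagonal part of $B$ — condition \ref{item:A4} only kills couplings between distinct degenerate levels — so the effective generator acquires extra phase terms that your argument should at least acknowledge.) As it stands, the proposal reproduces the heuristic motivation the paper itself gives before quoting Theorem \ref{apxthm}, but not a proof of it; for the actual argument one must follow \cite{BCCS}.
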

\subsection{Statement of the result}
In most of the physically relevant applications, the external control does not act on the spin part \cite{BMPS, Spohn}. 
Hence, we consider the JC dynamics with two different control terms acting only on the bosonic part, namely
\begin{equation}\label{cOP}
H_1=X\otimes\1\qquad H_2=P\otimes\1.
\end{equation}
To motivate our choice, we notice that -- for example -- in the cavity QED context the experimenters can only act on the electromagnetic field stored in the cavity. In this context the control terms $H_1,H_2$ correspond, respectively, to an external electric field and a magnetic field in the dipole approximation, see \cite[Section I.A]{BMPS}, and the control functions $u_1(t),u_2(t)$ model the amplitudes of this external fields.

With the previous choice, the complete controlled \Sch dynamics reads
\begin{equation}\label{cJCH}
\left\lbrace\begin{array}{l}
\iu\partial_t\psi=\Bigl(H_{\rm JC}(g)+u_1(t)H_1+u_2(t)H_2\Bigr)\psi \\[2mm]
\psi(0)=\Psi\sub{in} \in\Hi, \quad \Psi\sub{fin} \in\Hi\quad\mbox{s.t.}\;\norm{\Psi\sub{in}}=\norm{\Psi\sub{fin}}\\[2mm] 
u_1,u_2\in[0,c]\\[2 mm]
\omega,\Omega>0\\[2mm]
\abs{\Delta}\ll\omega,\Omega
\end{array}\right.
\end{equation}
Notice that the control functions $u_1,\,u_2$ are independent from each other so, as subcases, one can consider the system in which just one control is active. Obviously, controllability of the system in one of these two subcases implies controllability in the general case. This is exactly what we are going to prove. We consider the system \eqref{cJCH} in the subcases $u_1\equiv 0$ or $u_2\equiv 0$ and we prove that in each subcase the system is approximately controllable.

\medskip

The following theorems are the main results of the paper.
\begin{thm}[\bf Approximate controllability of JC dynamics]
\label{short}
The system \eqref{cJCH} with $u_1\equiv 0$ or $u_2\equiv 0$ is approximately controllable for 
every $g\in\R\setminus S_*$ where $S_*$ is a countable set.
\end{thm}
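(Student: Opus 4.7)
The strategy is to invoke Theorem \ref{apxthm} with $A_g:=-\iu H\sub{JC}(g)$, $B_i:=-\iu H_i$ (for $i=1$ or $2$), and the dressed basis $\Phi_{\mathcal N}=\{\ket{\bm n}\}_{\bm n\in\mathcal N}$ provided by \eqref{Eq:Spectrum}. The first step is to verify Assumption \ref{ass1}: $(\mathrm{A}_1)$ is exactly the content of \eqref{Eq:Spectrum}; $(\mathrm{A}_2)$ holds because every $\ket{\bm n}$ is a finite combination of products of Hermite functions with spin states, hence lies in the Schwartz class and \emph{a fortiori} in $\Do(X)\cap\Do(P)$; $(\mathrm{A}_3)$ follows from Nelson's analytic-vector theorem on the invariant span of $\{\upk{n},\dwnk{n}\}_{n\in\N}$, since $H\sub{JC}(g)+wH_i$ reduces on this span -- after completing the square in $X$ and $P$ -- to a shifted harmonic oscillator coupled linearly to the bounded spin part; $(\mathrm{A}_4)$ is vacuous except at the isolated $g$-values where two eigenvalues from \eqref{eig} coincide, which are absorbed into the final exceptional set $S_*$.

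The second step is to identify the non-zero matrix elements of $B_i$ in $\Phi_{\mathcal N}$ and to exhibit a chain of connectedness. Because $X,P$ are linear combinations of $a,a^\dagger$, which shift the oscillator number by one, inserting \eqref{eig_n+}--\eqref{eig_n-} yields
\[
\bra{m,\mu}H_i\ket{n,\nu}\neq 0\ \Longrightarrow\ m=n\pm 1,
\]
together with the boundary couplings $\bra{0,+}H_i\ket{-1,\delta}\propto\sin(\theta_0/2)$ and $\bra{0,-}H_i\ket{-1,\delta}\propto\cos(\theta_0/2)$; a direct computation further expresses each $\bra{n+1,\mu}H_i\ket{n,\nu}$ as an explicit trigonometric polynomial in $\theta_n/2,\theta_{n+1}/2$ weighted by $\sqrt{n+1}$ and $\sqrt{n+2}$, which vanishes at most on a countable set of $g$ (again absorbed into $S_*$). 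I then take as chain
\[
\mathcal{C}:=\bigl\{\bigl((n,+),(n+1,+)\bigr)\bigr\}_{n\geq 0}\cup\bigl\{\bigl((n,-),(n+1,+)\bigr)\bigr\}_{n\geq 0}\cup\bigl\{\bigl((-1,\delta),(0,+)\bigr)\bigr\},
\]
together with the reverse orientations, which is immaterial. Any pair in $\mathcal N^2$ is then connected through $\mathcal C$ by walking along the $+$-ladder, using a cross edge whenever the endpoint carries sign $-$ and the boundary edge to include $(-1,\delta)$.

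The heart of the proof -- and the expected main obstacle -- is the verification that $\mathcal C$ is non-resonant. For every edge $(\bm c_1,\bm c_2)\in\mathcal C$ and every pair $(\bm t_1,\bm t_2)\in\mathcal N^2\setminus\{(\bm c_1,\bm c_2),(\bm c_2,\bm c_1)\}$ with $\bra{\bm t_1}H_i\ket{\bm t_2}\neq 0$, one must show that
\[
\abs{E_{\bm c_1}(g)-E_{\bm c_2}(g)}=\abs{E_{\bm t_1}(g)-E_{\bm t_2}(g)}
\]
holds only on a countable subset of $\R$. Writing $E_{(n,\nu)}(g)=\omega(n+1)+\nu f_n(g)$ with $f_n(g)=\tfrac12\sqrt{\Delta^2+4g^2(n+1)}$, each such equation is an algebraic relation involving at most four of the square roots $f_{n_k}(g)$ together with terms linear in $\omega$ and $\Delta$; iterated squaring eliminates the radicals and turns it into a polynomial identity $P(g^2)=0$ whose coefficients depend only on $\omega,\Omega,\Delta$ and on the four indices involved. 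The delicate step is a finite case analysis, parametrized by the signs $\nu,\mu\in\{-,+\}$ and by the possible coincidences among indices, showing that for every admissible non-trivial tuple the polynomial $P$ is not identically zero -- the asymptotic behaviour $f_n(g)\sim g\sqrt{n+1}$ as $g\to\infty$ already obstructs most would-be accidental identities, reducing the remaining burden to a short list of special configurations. Once non-degeneracy of $P$ is established in each case, $P$ has only finitely many real roots; the countable union of these root sets, together with the countably many exceptional $g$ collected in the first two steps, constitutes $S_*$. For every $g\in\R\setminus S_*$ the set $\mathcal C$ is then a non-resonant chain of connectedness, and Theorem \ref{apxthm} yields the desired approximate controllability, proving Theorem \ref{short}.
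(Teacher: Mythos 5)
Your proposal is correct and follows essentially the same route as the paper: verify Assumption \ref{ass1} for $(\iu H_{\rm JC}(g),\iu H_j)$ in the dressed basis, observe that the controls couple only levels with $n\mapsto n\pm 1$ (plus the boundary state $\ket{-1,\delta}$), take exactly the paper's chain $\mathcal{C}_0$, and conclude via Theorem \ref{apxthm} after arguing that each resonance relation is not identically satisfied and hence holds only for countably many $g$. The differences are minor technical variants rather than a different strategy: you check \ref{item:A3} via Nelson's analytic-vector theorem instead of Kato--Rellich, you discard all eigenvalue crossings rather than only the genuinely problematic points $G^{(1)}_{\bm{n}}$ (harmless, since the crossing set is countable), and you exclude identically-resonant pairs through the large-$\abs{g}$ asymptotics of the $f_n$ and a polynomial non-degeneracy argument, where the paper instead uses analyticity together with the Taylor expansion of the eigenvalues at $g=0$ (and, in Section \ref{Sec:proof thm long}, the monotonicity properties of Lemma \ref{fprop}).
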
  

\goodbreak

\begin{thm}[\bf Characterization of the singular set] \label{long}
The set $S_*$, mentioned in Theorem \ref{short}, consists of the value $g=0$ and those $g\in\R$ that satisfy one of the following equations:
\begin{eqnarray}
\label{criteig} E_{(n+1,-)}(g)=E_{(n,\nu)}(g), & (n,\nu)\in\mathcal{N}\\
\label{1c}  2\omega=f_{m+1}(g)+f_m(g)-f_{n+1}(g)+f_n(g), & n,m\in\mathfrak{N}_+\\
\label{1d}  2\omega=f_{m+1}(g)-f_m(g)-f_{n+1}(g)+f_n(g), & n,m\in\mathfrak{N}_-,\;m<n \qquad \\
\label{2c}  2\omega=f_{m+1}(g)+f_m(g)-f_{n+1}(g)-f_n(g), & n,m\in\mathfrak{N}_-,\;m>n  \qquad
\end{eqnarray}
where $\mathcal{N},\,\mathfrak{N}_\pm$ and $f_n$ are defined in \eqref{indset},\eqref{newN} and \eqref{fns}, respectively.
\end{thm}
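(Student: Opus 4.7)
The plan is to deduce Theorem \ref{long} from Theorem \ref{apxthm} applied to $A = -\iu H_{\rm JC}(g)$, $B = -\iu H_j$ (for $j=1$ or $2$), with orthonormal basis $\Phi_{\mathcal{N}} = \{\ket{\bm{n}}\}_{\bm{n}\in\mathcal{N}}$ given by the dressed JC eigenvectors of \eqref{eig_n+}--\eqref{eig_n-}, augmented by the spurious state $\ket{-1,\delta}$. The characterization of $S_*$ then amounts to identifying which values of $g$ force the failure of hypothesis \ref{item:A4} in Assumption \ref{ass1} or of the non-resonance condition of Definition \ref{connchain}; the remaining items \ref{item:A1}--\ref{item:A3} follow from standard density and perturbation arguments on the Schwartz class, as is already implicit in Section 2.

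Before the spectral analysis, I would compute the matrix elements of $H_j$ in the basis $\Phi_{\mathcal{N}}$. Since $X=(a+a^\dagger)/\sqrt{2}$ and $P=\iu(a^\dagger-a)/\sqrt{2}$ shift the Hermite index by one while $H_j$ acts trivially on the spin factor, $H_j$ maps $\Hi_n$ into $\Hi_{n-1}\oplus\Hi_{n+1}$. Consequently $\langle\bm{m}|H_j|\bm{n}\rangle\neq 0$ forces $|m-n|=1$, together with the special couplings of $\ket{-1,\delta}$ to $\ket{0,\pm}$. Two consequences follow: first, the only degeneracies $E_{\bm{n}}(g)=E_{\bm{m}}(g)$ that can possibly violate \ref{item:A4} are those with $|n-m|=1$, and writing these out via \eqref{eig} produces exactly the relations \eqref{criteig}; second, the natural candidate $\mathcal{C} = \{((n,\nu),(n+1,\nu'))\}\cup\{\text{reverses}\}$ is a chain of connectedness in $\mathcal{N}$, and by direct computation from \eqref{eig_n+}--\eqref{eig_n-} the matrix elements along $\mathcal{C}$ are nonzero for every $g\in\R\setminus\{0\}$, so that $g=0$ is included in $S_*$.

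The core of the proof is the non-resonance analysis. For a chain link $((n,\nu),(n+1,\nu'))$ the energy gap reads
\[
\Delta E_{n,\nu,\nu'}(g) \;=\; \bigl|\omega + \nu' f_{n+1}(g) - \nu f_n(g)\bigr|,
\]
and I must require $\Delta E_{n,\nu,\nu'}(g) \neq \Delta E_{m,\mu,\mu'}(g)$ for every distinct admissible pair. Removing the outer absolute values gives two branches. In the \emph{same-sign} branch the $\omega$'s cancel and one is left with $\nu' f_{n+1}-\nu f_n = \mu' f_{m+1}-\mu f_m$; a monotonicity inspection of $n\mapsto f_n$ shows that this either forces $(n,\nu,\nu')=(m,\mu,\mu')$ or reduces to an instance of \eqref{criteig} already excluded. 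In the \emph{opposite-sign} branch one gets
\[
2\omega \;=\; \mu f_m - \mu' f_{m+1} + \nu f_n - \nu' f_{n+1},
\]
and enumerating the sixteen sign choices for $(\nu,\nu',\mu,\mu')$, discarding those incompatible with $2\omega>0$, and quotienting by the symmetry $(n,m)\leftrightarrow(m,n)$, collapses precisely to \eqref{1c}--\eqref{2c}. The use of $\mathfrak{N}_\pm$ in place of $\N$ records which side of the equation can host the index $-1$: this index is admissible only when $\ket{-1,\delta}$ is an endpoint of the corresponding transition, and the sign of $\Delta$ selects which of the two copies in \eqref{newN} is relevant in each equation.

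Finally, each of \eqref{criteig}--\eqref{2c} defines the zero set of a real-analytic non-identically-zero function of $g$, hence a discrete (in particular countable) set; taking the union over the countably many index choices $(n,m,\nu,\nu',\mu,\mu')$ yields the countable set $S_*$, and Theorem \ref{short} follows by invoking Theorem \ref{apxthm} on $\R\setminus S_*$. The main technical obstacle I anticipate is the sign bookkeeping in the opposite-sign branch, particularly the verification that some a~priori conceivable sign patterns are either impossible (because of positivity of $f_n$) or already encoded by \eqref{criteig}, and the careful tracking of how the boundary index $-1$ enters via $\mathfrak{N}_\pm$ together with the inequalities $m<n$ in \eqref{1d} and $m>n$ in \eqref{2c}.
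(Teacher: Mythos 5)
Your first two steps (the selection rule forcing $\abs{m-n}=1$, the identification of the dangerous degeneracies with \eqref{criteig}, and the inclusion of $g=0$) match the paper. The genuine gap is in your choice of chain. You take $\mathcal{C}=\{((n,\nu),(n+1,\nu'))\}$ with \emph{all four} sign combinations, i.e.\ including the ``lower'' arcs $[(n+1,-),(n,-)]$ and $[(n+1,-),(n,+)]$. Definition \ref{connchain} then requires every arc of your chain to be non-resonant against every other coupled arc, so your sixteen-sign enumeration must also cover resonances between two lower arcs --- and these do \emph{not} collapse to \eqref{1c}--\eqref{2c}. For instance, for $[(n+1,-),(n,+)]$ against $[(m+1,-),(m,+)]$ the opposite-sign branch gives $2\omega=f_{n+1}(g)+f_n(g)+f_{m+1}(g)+f_m(g)$, whose right-hand side equals $2\abs{\Delta}\ll 2\omega$ at $g=0$ and increases to $+\infty$ with $\abs{g}$, hence it has solutions; likewise $2\omega=(f_{n+1}-f_n)+(f_{m+1}-f_m)$ and the mixed pattern have solutions. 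None of these is an instance of \eqref{criteig} or of \eqref{1c}--\eqref{2c}, so with your chain the set of excluded couplings is strictly larger than the $S_*$ of the statement, and your claim that the enumeration ``collapses precisely to \eqref{1c}--\eqref{2c}'' is false. The paper avoids exactly this by working with the restricted chain $\mathcal{C}_0$ of \eqref{Czero}, containing only the arcs incident to the states $\ket{n+1,+}$ plus $[(0,+),(-1,\delta)]$: since Theorem \ref{apxthm} only asks the arcs of the \emph{chosen} chain to be non-resonant with coupled arcs, resonances among the lower arcs are harmless, and the case analysis (Cases 1.A--2.D) yields exactly $g=0$, \eqref{1c}, \eqref{1d} with $m<n$, and \eqref{2c} with $m>n$.

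Two further points. First, ``discarding those incompatible with $2\omega>0$'' is not enough even for the surviving sign patterns: deciding which equations genuinely have solutions (and so belong to $S_*$) and which have none requires the quantitative properties of $f_n$ --- property \ref{item:ii} to kill the pattern of case \ref{item:1B}, the monotonicity \ref{item:iv} together with $\abs{\Delta}\ll\omega$ to show that the right-hand sides of \eqref{1c}--\eqref{2c} start below $2\omega$ at $g=0$ and are strictly increasing in $\abs{g}$ (whence existence of exactly two solutions and the constraints $m<n$ in \eqref{1d} and $m>n$ in \eqref{2c}), and a sign argument to exclude case \ref{item:2D}. A bare ``analytic, not identically zero, hence countable zero set'' argument proves Theorem \ref{short} but not the characterization claimed here. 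Second, your description of the same-sign branch is inaccurate: it never ``reduces to \eqref{criteig}''; it forces $g=0$ (sometimes together with $\Delta=0$), which is precisely why $g=0$ appears as a separate element of $S_*$ in the statement.
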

The proof of Theorem \ref{short} follows two main steps: we introduce a Hilbert basis of eigenvectors of $H_{\rm JC}$, namely $\lbrace\ket{\bm{n}}\rbrace_{\bm{n}\in\mathcal{N}}$, and analyze the action of the control operators on it in order to show that all levels are coupled for every value of the parameter $g$ except a countable set (see Section \ref{step1}). We then construct a subset $\mathcal{C}_0$ of $\mathcal{N}^2$ and prove that it is a non-resonant chain of connectedness (see Section \ref{step2}). The claim then follows from the application of the general result by Boscain et al., namely Theorem\,\ref{apxthm}. 

To prove Theorem \ref{long} we carefully analyze the resonances of the system, which are solution to the forthcoming 
equation \eqref{res}. By proving that the latter has a countable number of solutions, we conclude that relevant pairs of energies are not resonant for every $g\in\R$ except the values in a countable set which will be characterized in the proof.

\section{Proof of Theorem \ref{short}}
\setcounter{subsection}{-1}
\subsection{Preliminaries}

Preliminarily, we have to show that Assumption\,\ref{ass1} is satisfied by
$$
(\iu H_{\rm JC}(g),\;\iu H_j,\;\R,\;\lbrace\ket{\bm{n}}\rbrace_{\bm{n}\in\mathcal{N}}),\quad\mbox{for}\;g\in\R \setminus S_0,\; 
j \in \set{1,2},
$$
where $S_0$ is a countable set.  Notice that the index set  $\mathcal{N}$ plays the role of the countable set $\mathcal{I}$ in Definition \ref{connchain}. 

We have already shown that $\lbrace\ket{\bm{n}}\rbrace_{\bm{n}\in\mathcal{N}}$ is a Hilbert basis of eigenfunctions for $H_{\rm JC}(g)$. Since $H_j$ is infinitesimally $H_0$-bounded ($H_j \ll H_0$ for short) for $j \in \set{1,2}$, then $H_j \ll H_{\rm JC}(g)$ 
for $j \in \set{1,2}$ (see \cite[Exercise XII.11]{RS4}). Hence \ref{item:A2} holds. Moreover, this implies that $H_{\rm JC}(g)+wH_j$ is selfadjoint on $\Do(H_{\rm JC})=\Do(H_0)$ for every $w\in\R$ (see \cite[Theorem X.12]{RS2}) and so \ref{item:A3} is satisfied.

As for assumption \ref{item:A4}, we observe that in view of the analyticity of the eigenvalues, there are just  countable many values of $g$ which correspond to eigenvalue intersections. With the only exception of these values, the eigenvalues are simple, so \ref{item:A4}  and Assumption \ref{ass1}  hold automatically for  every $g\in\R \setminus S_0$, where $S_0$ is the countable set corresponding to the eigenvalue intersections. 

On the other hand, we can further restrict the set of singular points from $S_0$ to $S_1 \subset S_0$. 
Indeed,  if two eigenvalues intersect in a point $g_*$, say $E_{\bm{n}}(g_*)=E_{\bm{m}}(g_*)$, property \ref{item:A4}  is still satisfied (by the same orthonormal system)
provided that $ \bra{\bm{m}}H_j \ket{\bm{n}}(g_*)=0,\; j \in \set{1,2}$.

Observe that, given $n\in\N$,
\begin{equation}\label{uncoup}
\abs{m -n} > 2  \quad\Rightarrow\quad\bra{\bm{m}}H_j \ket{\bm{n}}(g)=0\quad \forall g\in\R,\; j\in \set{1,2}. 
\end{equation}
Hence, {\it a priori} the only possibly problematic points are solutions to the following equations
\begin{equation}
E_{\bm{m}}(g)=E_{\bm{n}}(g)\qquad \bm{m},\bm{n}\in\mathcal{N},\quad {\bm{m}\neq\bm{n},}\quad \abs{m-n} \leq 2.
\end{equation}
By direct investigation, and using \eqref{Eq:Spectrum}, one notices that there are solutions only in the following cases 
(for $\bm{n} = {(n, \nu)} \in\mathcal{N}$):
\begin{align} \label{problematic} 
E_{\bm{n}}(g)=E_{(n+1,-)}(g) \quad 
\text{which is satisfied if and only if} \\ 
\nonumber
\abs{g}=G^{(1)}_{\bm{n}}:=\sqrt{\omega^2(2n+3)-\nu\sqrt{4\omega^4(n^2+3n+2)+\omega^2\Delta^2}};\\[5mm]
\label{innocent}E_{\bm{n}}(g)=E_{(n+2,-)}(g) \quad 
\text{which is satisfied if and only if} \\
\nonumber
\abs{g}=G^{(2)}_{\bm{n}}:=\sqrt{2\omega^2(n+2)-\nu\sqrt{4\omega^4(n^2+4n+3)+\omega^2\Delta^2}};\\[5mm]
\label{Eq:neutral}
{E_{(n,+)}(g)=E_{(n,-)}(g) \quad
\text{which is satisfied if and only if} }\\
\nonumber
{g=0\quad\mbox{and}\quad\Delta=0.}
\end{align}
We will establish {\it a posteriori} whether we have indeed to exclude those points by the analysis in the next subsection, after looking at the action of the control operators. 
\subsection{Coupling of energy levels}\label{step1}
To apply Theorem \ref{apxthm} to our case we need to build a non resonant chain of connectedness. As observed before in \eqref{uncoup}, the control operators do not couple most of the pairs. 

The coupling between remaining pairs is easily checked by using \eqref{low_raise}, \eqref{annih_cre},
 \eqref{eig_n+}, and \eqref{eig_n-}. For the sake of a shorter notation, we set $c_n:=\cos(\theta_n/2)$ and $s_n:=\sin(\theta_n/2)$.  Some straightforward  calculations for $H_1$ yield the following result: 
\begin{align*}
\bra{n,-}H_1\ket{n,+}=&\;0\\
\bra{n+1,+}H_1\ket{n,+}=&\;\frac{1}{\sqrt{2}}(\sqrt{n+1}c_nc_{n+1}+\sqrt{n+2}s_ns_{n+1})\neq 0 & \\
\bra{n+2,+}H_1\ket{n,+}=&\;0\\
\bra{n+1,-}H_1\ket{n,-}=&\;\frac{1}{\sqrt{2}}(\sqrt{n+1}c_nc_{n+1}+\sqrt{n+2}s_ns_{n+1})\neq 0 & \\
\bra{n+2,-}H_1\ket{n,-}=&\;0\\
\bra{n+1,-}H_1\ket{n,+}=&\;\frac{1}{\sqrt{2}}(\sqrt{n+2}s_nc_{n+1}-\sqrt{n+1}c_ns_{n+1})\neq 0 & \Leftrightarrow g\neq 0
\end{align*}  
\begin{align*}
\bra{n+2,-}H_1\ket{n,+}=&\;0 \\
\bra{n+1,+}H_1\ket{n,-}=&\;\frac{1}{\sqrt{2}}(\sqrt{n+2}c_ns_{n+1}-\sqrt{n+1}s_nc_{n+1})\neq 0 & \Leftrightarrow g\neq 0\\
\bra{n+2,+}H_1\ket{n,-}=&\;0\\
\bra{0,-}H_1\ket{-1,\delta}=&\;\frac{c_0}{\sqrt{2}}\geq\frac{1}{2}\\
\bra{0,+}H_1\ket{-1,\delta}=&\;\frac{s_0}{\sqrt{2}}\neq 0\Leftrightarrow g\neq 0.
\end{align*}  
From these computations we see that (compare with \eqref{problematic},\eqref{innocent}) in the points $\lbrace G^{(2)}_{\bm{n}}\rbrace_{\bm{n}\in\mathcal{N}}$ the system still satisfies Assumption \ref{ass1}, while in the points $\lbrace G^{(1)}_{\bm{n}}\rbrace_{\bm{n}\in\mathcal{N}}$ does not. {The point $g=0$ is never solution to \eqref{problematic} or \eqref{innocent} in view of the assumption $\abs{\Delta}\ll\omega$.  Moreover, since $\bra{n,-}H_1\ket{n,+}=0$ for every $g\in\R$, the system still satisfies Assumption \ref{ass1} for $g=0$, notwithstanding \eqref{Eq:neutral}.}

The same results hold for $H_2$. Moreover, in each of the previous cases one has 
\[ \bra{\bm{m}}H_2\ket{\bm{n}}=\iu \bra{\bm{m}}H_1\ket{\bm{n}}. \] 
We conclude that Assumption \ref{item:A4} is satisfied for every $g \in \R \setminus S_1$, 
where $S_1 := \lbrace G^{(1)}_{\bm{n}}\rbrace_{\bm{n}\in\mathcal{N}}$.

\goodbreak

\subsection{Non-resonances of relevant pairs}\label{step2}
Knowing exactly the pairs of levels coupled by the control terms,  
we claim that the set {(illustrated in Figure\,1)}
\begin{equation}\label{Czero}
\mathcal{C}_0=\left\lbrace \bigl[(n+1,+),(n,+)\bigr],\;\bigl[(n+1,+),(n,-)\bigr]\;|\;n\in\N\right\rbrace\cup\left\lbrace\bigl[(0,+),(-1,\delta)\bigr]\right\rbrace
\end{equation}
is a non-resonant chain of connectedness for every $g\in\R \setminus S_2$, where $S_2\subset\R$ is a countable set.


\begin{figure}
\label{Fig:1}
\caption{\small Schematic representation of the eigenstates of the JC Hamiltonian and the chain of connectedness $\mathcal{C}_0$, {in the case $\delta(\Delta)=-$.} Thick black lines correspond to pairs of eigenstates in the chain $\mathcal{C}_0$. Gray  dashed lines correspond instead to pairs of  eigenstates coupled by the control which are not in $\mathcal{C}_0$. \vspace{5mm}
} 
\centering
\begin{tikzpicture} [scale=.7]
         \draw [thick, dashed, lightgray] (-2,0)--(11,0);
         \draw [very thin] (0,2)--(9,2);
         \draw [thick, dashed] (9,2)--(11,2); 
         \draw (-2,-.4) node{$\ket{-1,\delta}$};
         \foreach\i in {0,...,4}{
          	\draw (2*\i,0)--++(0,-0.2);
          	\draw (2*\i,-.4) node{$\ket{\i,-}$};
          	\draw (2*\i,2)--++(0,0.2);
          	\draw (2*\i,2.4) node{$\ket{\i,+}$};
         }
         \foreach\i in {0,...,3}{
          	\draw [thick] (2*\i,2)--(2*\i+2,2);
          	\draw [thick] (2*\i,2)--(2*\i-2,0);	  
         }  
         \foreach\i in {0,...,4}{
         \draw [thick, dashed, lightgray] (2*\i,2)--(2*\i+2,0);
         } 
         \draw [thick] (8,2)--(6,0);
         \draw [thick] (9,1)--(8,0);
         \draw [thick, dashed] (10,2)--(9,1);
         
         \draw [thick] (8,2)--(9,2);
         \draw [thick, dashed] (9,2)--(10,2);        
\end{tikzpicture}
\end{figure}
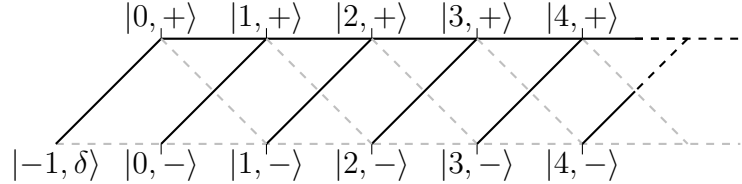


To prove this claim, we have to show that for every $g\in\R \setminus S_2$ 
each pair of eigenstates in $\mathcal{C}_0$ has no resonances with every other pair coupled by the control term. In view of the computation above, there are just four types of pairs coupled, as illustrated in Figure\,1 and 2. So, we define $S_2$ as the set of the solutions $g$ to the following equations:
\begin{equation}\label{res}
\abs{E_{\bm{k}}(g)-E_{\bm{l}}(g)}=\abs{E_{\bm{s}}(g)-E_{\bm{t}}(g)}
\end{equation}
where $[\bm{k},\bm{l}] \in\mathcal{C}_0$ and
\begin{align}
\nonumber
[\bm{s},\bm{t}] \in\mathcal{C}_0  
& \cup \left\lbrace \bigl[(n+1,-),(n,-)\bigr],\;\bigl[(n+1,-),(n,+)\bigr]\;|\;n\in\N\right\rbrace \\
&\cup  \left\lbrace\bigl[(0,-),(-1,\delta)\bigr]\right\rbrace.
\label{constraints} 
\end{align}
{It is enough to prove that the set of solutions to the latter equations is countable.}
Observe that, by the analyticity of the functions $g\mapsto E_{\bm{k}}(g)-E_{\bm{l}}(g)$, equation \eqref{res} may have at most countable many solutions unless is identically satisfied. Thus, we need to show that 
\[ E_{\bm{k}}(g)-E_{\bm{l}}(g)=\pm(E_{\bm{s}}(g)-E_{\bm{t}}(g))\]
is not satisfied for some value $g$ or, equivalently, that the Taylor expansions of r.h.s.\ and l.h.s.\ differ in at least a point. 
The same argument was used in \cite{BMPS}, where the authors computed the perturbative expansion of the eigenvalues of the Rabi Hamiltonian up to forth order in $g$. In our case, the model is exactly solvable,
\footnote{Actually, the Rabi model is also solvable, as recently proved by Braak \cite{Br}. However, the \virg{explicit} expression of the eigenvalues is practically intractable, so that the authors of \cite{BMPS} preferred to compute the perturbative series in $g$ up to fourth order.} 
so that we can compute the series expansions in $g=0$ directly from expression \eqref{eig}. 
An explicit computation yields the Taylor expansion:
$$\begin{array}{ll}
E_{\bm{n}}=\omega(n+1)+\nu\left(\frac{\abs{\Omega-\omega}}{2}+\frac{n+1}{\abs{\Omega-\omega}}g^2-\frac{(n+1)^2}{\abs{\Omega-\omega}^3}g^4+o\left(g^4\right)\right) & \quad\quad\mbox{for }\Delta\neq 0\\
E_{\bm{n}}=\omega(n+1)+\nu\sqrt{n+1}g & \quad\quad\mbox{for }\Delta=0.
\end{array}$$
It is now easy to check, mimicking  Step 2 in the proof of \cite{BMPS}, that for every choice of the indices in equation \eqref{res} the r.h.s.\ and l.h.s.\ have different series expansion at $g=0$. We are not going to detail this calculation, since in the next Section we will analyze in full detail equations \eqref{res}, in order to characterize the set $S_2$. By setting $S_* =S_1\cup S_2$, the proof of Theorem \ref{short} is concluded.

\section{Proof of Theorem \ref{long}}
\label{Sec:proof thm long}

In this proof we will discuss the resonances of the pairs of eigenstates in the chain $\mathcal{C}_0$, defined in 
\eqref{Czero}. Our aim is to provide conditions to determine whether, for a particular value of $g$, resonances are present or not. This requires a direct investigation of equation \eqref{res}. Since these computations are rather long, we prefer to collect them in this section, not to obscure the simplicity of the proof of Theorem \ref{short} with the details on the characterization of the set 
$S_*$. 

Let us recall that the sets $\mathfrak{N}_\pm$ defined in \eqref{newN} include both the natural numbers, and $-1$ is added to the one, among $\mathfrak{N}_+$ and $\mathfrak{N}_-$, whose label equals $\delta(\Delta)$. 

In each of the following subcases the existence of solutions to equation \eqref{res} is discussed, for every choice of the indices compatible with the constraints \eqref{constraints}. The mathematical arguments are based on elementary properties of functions $f_n(g)= \fhalf\sqrt{\Delta^2+4g^2(n+1)}$, which are summarized in Lemma \ref{fprop} in the Appendix. 

\medskip

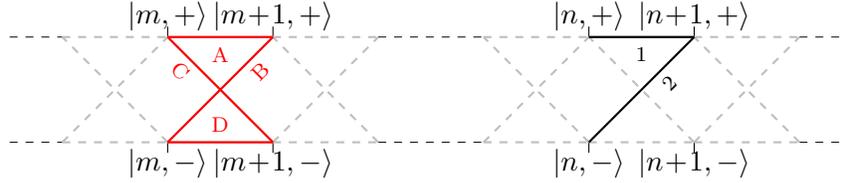
\begin{figure}
\caption{{\small Classification of the arcs representing pairs of eigenstates coupled by the control operator. On the left-hand panel, red arcs and labels show the classification of generic arcs in four different types $\set{A,B,C,D}$. 
On the right-hand panel, black arcs of the set $\mathcal{C}_0$ are classified in two types $\set{1,2}$.
Labels correspond to the cases enumerated in the proof of Theorem \ref{long}.} 
}
\vspace{2mm}
\centering
\begin{tikzpicture} [scale=.7]\label{resograph}
		 \draw [very thin,dashed] (-3,0)--(-2,0);
         \draw [very thin,dashed] (4,0)--(6,0);
         \draw [very thin,dashed] (12,0)--(13,0);
         \draw [very thin,dashed] (-3,2)--(-2,2);
         \draw [very thin,dashed] (4,2)--(6,2);
         \draw [very thin, dashed] (12,2)--(13,2);
         \foreach\i in {0,1,4,5}{
          	\draw (2*\i,2)--++(0,0.2);
         }
         \foreach\i in {0,1,4,5}{
          	\draw (2*\i,0)--++(0,-0.2);
         }
         \foreach\i in {0}{ 
          	\draw [thick, lightgray] (2*\i,0)--(2*\i+2,2);}
         \foreach\i in {0}{ 
          	\draw [thick, lightgray] (2*\i+2,0)--(2*\i,2);}
         \foreach\i in {-1,1,3,4,5}{ 
          	\draw [thick, dashed, lightgray] (2*\i+2,0)--(2*\i,2);}
         \foreach\i in {-1,1,3,5}{ 
          	\draw [thick, dashed, lightgray] (2*\i,0)--(2*\i+2,2);}
         \foreach\i in {0,4}{ 
          	\draw [thick, lightgray] (2*\i,2)--(2*\i+2,2);
         }  
         \foreach\i in {-1,1,3,5}{ 
          	\draw [thick, dashed, lightgray] (2*\i,2)--(2*\i+2,2);
         }   
         \foreach\i in {0}{ 
          	\draw [thick, lightgray] (2*\i,0)--(2*\i+2,0);
         }
         \foreach\i in {-1,1,3,4,5}{ 
          	\draw [thick, dashed, lightgray] (2*\i,0)--(2*\i+2,0);
         }
         \foreach\i in {}{ 
         	\draw [thick, lightgray] (2*\i+1,.375) to [out=0,in=140] (2*\i+2,0);}
         \foreach\i in {}{ 
         	\draw [thick, lightgray] (2*\i,2) to [out=-40,in=180] (2*\i+1,1.625);}
         \foreach\i in {}{ 
         	\draw [thick, dashed, lightgray] (2*\i+1,1.625) to [out=0,in=-140] (2*\i+2,2);}
         \foreach\i in {}{ 
         	\draw [thick, dashed, lightgray] (2*\i+1,1)--(2*\i+2,2);}
         \foreach\i in {}{ 
         	\draw [thick, lightgray] (2*\i,0)--(2*\i+1,1);}
         \draw [thick] (8,2)--(10,2);
         \node [below] at (9,2) {\tiny 1};
                  	
         \draw [thick] (8,0)--(10,2);
         \node [label={[rotate=45]right:\tiny 2}] at (9.1,.9) {};
         	\draw [thick, red] (0,2)--(2,2);
         	\node [below, red] at (1,2) {\tiny A};
         	\draw [thick, red] (0,0)--(2,0);
         	\node [above, red] at (1,0) {\tiny D};
         	\draw [thick, red] (0,0)--(2,2);
         	\node [label={[red, rotate=45]right:\tiny B}] at (1.3,1.1) {};
         	\draw [thick, red] (2,0)--(0,2);
         	\node [label={[red, rotate=-45]left:\tiny C}] at (.7,1.1) {};
        
         \draw (0,2.4) node{\small $\ket{m,+}$};
         \draw (2,2.4) node{\small $\ket{m\!+\!1,+}$};
         \draw (8,2.4) node{\small $\ket{n,+}$};
         \draw (10,2.4) node{\small $\ket{n\!+\!1,+}$};
         \draw (0,-.4) node{\small $\ket{m,-}$};
         \draw (2,-.4) node{\small $\ket{m\!+\!1,-}$};
         \draw (8,-.4) node{\small $\ket{n,-}$};
         \draw (10,-.4) node{\small $\ket{n\!+\!1,-}$};
        
\end{tikzpicture}
\end{figure}

{\bf Case 1:  Assume $[\bm{k},\bm{l}]={[(n +1,+),(n,+)]},\;n\in\mathfrak{N}_+$.} 
This assumption correspond to select the black arc labeled by 1 in the graph in Figure \ref{resograph}. We investigate the possible resonances between the selected arc and the other arcs, classified according to their qualitative type (see the labels in the left-hand panel of Figure \ref{resograph}). This analysis amounts to consider, with the help of Lemma \ref{fprop} (properties
\ref{item:i}-\ref{item:iv}), the following subcases:
\begin{enumerate}[itemsep=2ex,label={$({1.\mathrm{\Alph*}})$},ref=$({1.\mathrm{\Alph*}})$]
\item \label{item:1A}
$[\bm{s},\bm{t}]=\bigl[(m+1,+),(m,+)\bigr],\;m\in\mathfrak{N}_+,\;m\neq n$. 
		Equation \eqref{res} reads
		\[ f_{n+1}(g)-f_n(g)=f_{m+1}(g)-f_m(g)\]
		which is satisfied if and only if $g=0$, because $f_{n+1}(g)-f_n(g)$ is strictly decreasing in $n$
		for $g \neq 0$ in view of \ref{item:iv}. 
\item \label{item:1B}		
		$[\bm{s},\bm{t}]=\bigl[(m+1,+),(m,-)\bigr],\;m\in\mathfrak{N}_-.$
		Equation \eqref{res} reads
		\[ f_{n+1}(g)-f_n(g)=f_{m+1}(g)+f_m(g)\]
		which is satisfied if and only if $g=0$ and $\Delta=0$. 
		\newline
		Indeed, by \ref{item:ii} one has
		\begin{align*}
		\qquad f_{n+1}(g)-f_n(g) &\leq {2}\abs{g} \(\sqrt{n+2}-\sqrt{n+1}\) = \frac{2 \abs{g}}{\sqrt{n+2} + \sqrt{n+1}}. 
		\end{align*} 
		For $\Delta< 0$, one has $n \in\mathfrak{N}_+ = \N$ and $m\in\mathfrak{N}_-= \N \cup \{-1\}$, 
		as illustrated in Figure\,1. Hence, for $g \neq 0$,
		\begin{align*}
		\qquad f_{n+1}(g)-f_n(g) &\leq \frac{2 \abs{g}}{\sqrt{2} + 1} < \abs{g}  
		                          \leq \abs{g} (\sqrt{m+2}+\sqrt{m+1}) \\
		                          & {\leq} \, f_{m+1}(g) {+} f_m(g),                  
		\end{align*} 
		and the last inequality is strict whenever $\Delta\neq 0$.  \newline      
		Analogously, for $\Delta > 0$ one has $n \in\mathfrak{N}_+ = \N \cup  \{-1\}$ and $m\in\mathfrak{N}_-= \N$. 
		Hence, for $g \neq 0$,
		\begin{align*}
		\qquad f_{n+1}(g)-f_n(g) &\leq  2 \abs{g}  
					 <  \abs{g} (\sqrt{m+2}+\sqrt{m+1}) \\
		                          & {\leq} \, f_{m+1}(g) {+} f_m(g).
		\end{align*} 		
		As above, the last inequality is strict whenever $\Delta\neq 0$.
\item \label{item:1C}		
		$[\bm{s},\bm{t}]=\bigl[(m+1,-),(m,+)\bigr],\;m\in\mathfrak{N}_+.$
		Equation \eqref{res} reads 
		\[ \omega+f_{n+1}(g)-f_n(g)=\abs{\omega-f_{m+1}(g)-f_m(g)}.\]
		If $\abs{g}<G^{(1)}_{m,+}$ one has
		\[ f_{n+1}(g)-f_n(g)=-f_{m+1}(g)-f_m(g),\]
		which implies $g=0$ and $\Delta=0$ because $-f_{m+1}(g)-f_m(g)\leq 0\leq f_{n+1}(g)-f_n(g)$, 
		{and the first inequality is strict whenever $\Delta\neq 0$, while the second inequality is strict whenever $g\neq 0$.} 
		\newline
		On the other hand, if $\abs{g}\geq G^{(1)}_{m,+}$ the equation above reads
		\[ 2\omega=f_{m+1}(g)+f_m(g)-f_{n+1}(g)+f_n(g)\qquad \eqref{1c}\]
		which has two solutions because the r.h.s.\ is equal to $\abs{\Delta}$ in zero (and $\abs{\Delta} \ll \omega$ in view of   		\eqref{cJCH}) and is strictly increasing in $\abs{g}$. Indeed, one easily sees that
		\begin{align*}
		\partial_g&\left(f_{m+1}(g)+f_m(g)-f_{n+1}(g)+f_n(g)\right)=\\
		&\quad g\left(\frac{m+2}{f_{m+2}(g)}+\frac{m+1}{f_{m}(g)}-\frac{n+2}{f_{n+1}(g)}+\frac{n+1}{f_n(g)}\right)=:gC_{m,n}(g)
		\end{align*}
		{where $C_{m,n}(g)>0$ for every choice of {indices} $n,m\in\mathcal{N}_+$ and $\Delta\neq 0$. For $\Delta=0$ the r.h.s. of \eqref{1c} is $\abs{g}(\sqrt{m+2}+\sqrt{m+1}-\sqrt{n+2}+\sqrt{n+1})$ which is clearly strictly increasing in $\abs{g}$}.
\item \label{item:1D}				
		$[\bm{s},\bm{t}]=\bigl[(m+1,-),(m,-)\bigr],\;m\in\mathfrak{N}_-.$
		Equation \eqref{res} reads 
		\[ \omega+f_{n+1}(g)-f_n(g)=\abs{\omega-f_{m+1}(g)+f_m(g)} \]
		Then if $\abs{g}<G^{(1)}_{m,-}$
		\[ f_{n+1}(g)-f_n(g)=-f_{m+1}(g)+f_m(g), \]
		which is satisfied if and only if $g=0$ because $f_{n+1}(g)-f_n(g)\geq 0\geq -f_{m+1}(g)+f_m(g)$
		and the inequalities are strict whenever $g\neq 0$. 
		If instead $\abs{g} \geq G^{(1)}_{m,-}$, the equation reads 		
		\[ 2\omega=f_{m+1}(g)-f_m(g)-f_{n+1}(g)+f_n(g)\qquad \eqref{1d} \]
		which has two solutions if and only if $m<n$. Indeed, $f_{n+1}(g)-f_n(g)$ is decreasing in $n$ 
		in view of \ref{item:iv} {and the derivative of the r.h.s. is}
		\begin{align*}
		\partial_g&\left(f_{m+1}(g)-f_m(g)-f_{n+1}(g)+f_n(g)\right)=\\
		&\quad g\left(\frac{m+2}{f_{m+2}(g)}-\frac{m+1}{f_{m}(g)}-\frac{n+2}{f_{n+1}(g)}+\frac{n+1}{f_n(g)}\right)=:gD_{m,n}(g).
		\end{align*}
		{The function $D_{m,n}(g)$ is strictly positive for every $\Delta\neq 0$ and $m\in\mathcal{N}_-,\,n\in\mathcal{N}_+$ with $m<n$ because $\frac{n+2}{f_{n+1}(g)}-\frac{n+1}{f_n(g)}$ is strictly decreasing in $n$. For $\Delta=0$ the r.h.s.\ of \eqref{1d} is $\abs{g}(\sqrt{m+2}-\sqrt{m+1}-\sqrt{n+2}+\sqrt{n+1})$ which is positive if and only if $m<n$, and is clearly strictly increasing in $\abs{g}$.}
\end{enumerate}
In view of the above analysis, there exist 
non trivial resonances {(for $g\neq 0$)} in cases \ref{item:1C}, and \ref{item:1D} for $m<n$. 
In such circumstances, equations \eqref{1c},\eqref{1d} have two solutions each.{As for the trivial value $g=0$, the system exhibits multiple resonances, as noted in all previous cases. Hence, $g=0$ has to be included in the set  of resonant points.}

\medskip

{\bf Case 2:  Assume $[\bm{k},\bm{l}]=[(n+1,+),(n,-)],\;n\in\mathfrak{N}_-$.} 
This assumption corresponds to select the black arc labeled by 2 in the graph in Figure \ref{resograph}.
 As before, we proceed by considering the following sub-cases: 
\begin{enumerate}[itemsep=2ex,label={$({2.\mathrm{\Alph*}})$},ref=$({2.\mathrm{\Alph*}})$]
\item \label{item:2A} By symmetry, this case reduces to the subcase \ref{item:1B}. As already noticed, a
 solution exists if and only if $g=0$ and $\Delta=0$.
\item \label{item:2B}
		$[\bm{s},\bm{t}]=\bigl[(m+1,+),(m,-)\bigr],\;m\in\mathfrak{N}_-,\;m\neq n.$ The corresponding equation reads
		\[ f_{n+1}(g)+f_n(g)=f_{m+1}(g)+f_m(g)\] 
		which has only the trivial solution $g=0$. 
\item \label{item:2C}
		$[\bm{s},\bm{t}]=\bigl[(m+1,-),(m,+)\bigr],\;m\in\mathfrak{N}_+.$ Equation \eqref{res} reads
		\[ \omega+f_{n+1}(g)+f_n(g)=\abs{\omega-f_{m+1}(g)-f_m(g)}.\]
		Then if $\abs{g}<G^1_{m,+}$ the equation above become
		\[f_{n+1}(g)+f_n(g)=-f_{m+1}(g)-f_m(g)\] 
		which clearly implies that $g=0$ and $\Delta=0$. \newline
		On the other hand, if $\abs{g}\geq G^1_{m,+}$ the equation reads
		\[ 2\omega=f_{m+1}(g)+f_m(g)-f_{n+1}(g)-f_n(g)\qquad \eqref{2c}\]
		which has non trivial solutions if and only if $m>n$ because $f_n$ is increasing in $n$ for $g\neq 0$. Since the r.h.s. is strictly increasing in $\abs{g}$, as one can see using an argument similar to case \ref{item:1C}, the latter equation has two solutions if $m>n$. 
\item \label{item:2D}
		$[\bm{s},\bm{t}]=\bigl[(m+1,-),(m,-)\bigr],\;m\in\mathfrak{N}_-.$  Equation \eqref{res} reads
		\[ \omega+f_{n+1}(g)+f_n(g)=\abs{\omega-f_{m+1}(g)+f_m(g)}. \]
		If $\abs{g}<G^1_{m,-}$ the above equation reads 
		\[f_{n+1}(g)+f_n(g)=-f_{m+1}(g)+f_m(g)\]
		which is satisfied if and only if  $g=0$ and $\Delta=0$, since $-f_{m+1}(g)+f_m(g)\leq 0\leq f_{n+1}(g)+f_n(g)$, 
		{and the first inequality is strict whenever $g\neq 0$, while the second inequality is strict whenever $\Delta\neq 0$}. 
		If, instead, $\abs{g}\geq G^1_{m,-}$, the above equation becomes
		\[ 2\omega=f_{m+1}(g)-f_m(g)-f_{n+1}(g)-f_n(g), \]
		which has no solution since the r.h.s.\ is non-positive for every $g,\Delta\in\R$ and  $n,m\in\mathcal{N}_-$.
\end{enumerate}

In summary, as far as Case 2 is concerned, there exist non trivial resonances (for $g \neq 0$) only in the case \ref{item:2C} for $m>n$, and in such case equation \eqref{2c} has exactly two solutions. 

Recalling the definition of $\mathcal{C}_0$ (see \eqref{Czero}), one notices that every element 
of $\mathcal{C}_0$ is non-resonant with every other element of $\mathcal{C}_0$ except for the trivial value $g=0$, 
in view of the analysis of the cases \ref{item:1A}, \ref{item:1B},  \ref{item:2A} and \ref{item:2B}.

The proof above exhibits equations \eqref{1c}, \eqref{1d} and \eqref{2c} appearing in the statement  of Theorem \ref{long}, 
as the equations which characterize the values of $g$ in $S_2$, namely those values such that an arc in $\mathcal{C}_0$ 
is resonant with some arc {(not in $\mathcal{C}_0$)} non-trivially coupled by the interaction.   
As we said before, the value $g=0$ is included in $S_2$. 

Finally, one has to include in $S_*$ those values of $g$ for which {some eigenspace has dimension $2$ and the corresponding eigenvectors are coupled}. These values, {defining the set $S_1$, 
have been already characterized by equation \eqref{criteig}, whose solutions are exhibited in \eqref{problematic}. 
} 

In view of Theorem \ref{apxthm}, we conclude that the system is approximately controllable for every $g\in\R\setminus S_*$, where $S_*= S_1\cup S_2$ is characterized by equations \eqref{criteig}, \eqref{1c}, \eqref{1d}, and \eqref{2c}.
This concludes the proof of Theorem \ref{long}.

\bigskip


\begin{appendix}

\section{Ancillary results}

The following Lemma contains a list of useful elementary properties of the functions $\{f_n\}_{n\in\N}$, 
which have been used in the proof of Theorem \ref{long} (Section \ref{Sec:proof thm long}).

\begin{lemma}\label{fprop} Let $f_n$, $n\in\N$, be defined as in \eqref{fns}. Then 
\begin{enumerate}[label={$(\mathrm{F}.{\arabic*})$},ref=$(\mathrm{F}.{\arabic*})$]
\item \label{item:i}$f_m(g)-f_n(g)\geq 0\quad \text{ if and only if }\quad m\geq n$; \vspace{2mm}
\item \label{item:ii}$f_{n+1}(g)-f_n(g)\leq {2}\abs{g}(\sqrt{n+2}-\sqrt{n+1})$; \vspace{2mm}
\item \label{item:iv}$f_{n+1}(g)-f_n(g)$ is strictly increasing w.r.to $\abs{g}$, and strictly decreasing in $n$; \vspace{2mm}
\end{enumerate}
\end{lemma}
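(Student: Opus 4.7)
The plan is to establish each of \ref{item:i}--\ref{item:iv} by elementary manipulations starting from the explicit formula $f_n(g)^2 = \tfrac{1}{4}\Delta^2 + g^2(n+1)$. To dispose of \ref{item:i}, I would simply observe that the right-hand side is weakly increasing in $n$ with slope $g^2$, so that (for $g \neq 0$) $f_n$ itself is strictly increasing in $n$ and the claimed equivalence of signs follows at once; the case $g=0$ is degenerate ($f_n \equiv |\Delta|/2$) and irrelevant to the applications, since $0 \in S_*$ has already been excluded.

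For \ref{item:ii}, the strategy is to rationalize both sides. The identity $f_{n+1}(g)^2 - f_n(g)^2 = g^2$ yields
\[ f_{n+1}(g) - f_n(g) \;=\; \frac{g^2}{f_{n+1}(g)+f_n(g)}, \]
and similarly $\sqrt{n+2}-\sqrt{n+1} = 1/(\sqrt{n+2}+\sqrt{n+1})$. Cross-multiplying reduces the claim to $|g|(\sqrt{n+2}+\sqrt{n+1}) \leq 2\bigl(f_{n+1}(g)+f_n(g)\bigr)$, which is immediate from the pointwise bound $2 f_n(g) = \sqrt{\Delta^2+4g^2(n+1)} \geq 2|g|\sqrt{n+1}$ applied to each summand on the right.

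For \ref{item:iv}, I would treat the two monotonicities separately. Strict monotonicity in $|g|$ I would get by direct differentiation:
\[ \partial_g\bigl(f_{n+1}(g)-f_n(g)\bigr) \;=\; g\Bigl(\frac{n+2}{f_{n+1}(g)} - \frac{n+1}{f_n(g)}\Bigr), \]
and positivity of the bracket would follow by squaring: the inequality $(n+2)^2 f_n(g)^2 > (n+1)^2 f_{n+1}(g)^2$ expands to $\tfrac{1}{4}(2n+3)\Delta^2 + g^2(n+1)(n+2) > 0$, which is manifest for $(g,\Delta)\neq (0,0)$. Hence the derivative shares its sign with $g$, and the difference is strictly increasing in $|g|$. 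Strict monotonicity (decrease) in $n$ I would obtain by viewing $f_n(g)$ as $\phi(n+1)$ with $\phi(x) := \tfrac{1}{2}\sqrt{\Delta^2+4g^2 x}$; a short computation gives $\phi''(x) = -2g^4(\Delta^2+4g^2 x)^{-3/2} < 0$ for $g \neq 0$, so $\phi$ is strictly concave on $[0,\infty)$, and consequently the unit forward differences $\phi(n+2) - \phi(n+1) = f_{n+1}(g) - f_n(g)$ are strictly decreasing in $n$.

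I do not anticipate any genuine obstacle: each item collapses to one squaring step or one derivative computation. The only minor care required is in tracking the degenerate value $g=0$ (and, occasionally, $\Delta=0$), where some of the strict inequalities above become equalities; this is harmless for the applications in Section \ref{Sec:proof thm long}, where the lemma is invoked at points that are already not in $S_*$, in particular with $g \neq 0$.
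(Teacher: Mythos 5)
Your proof is correct and follows essentially the same elementary route as the paper's appendix proof: each item is handled by direct manipulation of the explicit formula, differentiation in $g$, and monotonicity/concavity in the shifted index, so the two arguments coincide in substance (your bracket positivity in $(\mathrm{F}.3)$ is the paper's $\partial_y G>0$ in disguise, and your concavity of $\phi$ matches its comparison of $\partial_y F$ at consecutive arguments). Incidentally, your rationalization of $(\mathrm{F}.2)$ via $f_{n+1}^2-f_n^2=g^2$ even yields the sharper constant $1$ in place of $2$, and your explicit tracking of the degenerate values $g=0$, $\Delta=0$ is slightly more careful than the paper's.
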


\begin{proof} 
Property \ref{item:i} follows from the monotonicity of the square root.
As for \ref{item:ii}, one notices that
$$
\quad f_{n+1}(g)-f_n(g)\leq \frac{{2}g^2}{\sqrt{\Delta^2+4g^2(n+2)}}
$$
which is equivalent to 
$$
(\Delta^2+4g^2(n+2))-\sqrt{\Delta^2+4g^2(n+2)}\sqrt{\Delta^2+4g^2(n+1)}\leq 
{4}g^2
$$
which follows from the fact that 
$$
(\Delta^2+4g^2(n 
+{1}))\leq\sqrt{\Delta^2+4g^2(n+2)}\sqrt{\Delta^2+4g^2(n+1)}. 
$$
Then, 
\begin{align*} \label{}
\quad f_{n+1}(g)-f_n(g) 
&\leq \frac{{2}g^2}{\sqrt{\Delta^2+4g^2(n+2)}}
\leq\frac{{2}g^2}{2\abs{g}\sqrt{n+2}} \\
& \leq {2}\abs{g}(\sqrt{n+2}-\sqrt{n+1}). 
\end{align*}
Notice that the last inequality is strict whenever $g\neq 0$.


As for \ref{item:iv}, one sets 
$$
F(x,y):=\frac{1}{2}\sqrt{\Delta^2+4x^2y} \quad \mbox{ and } \quad  G(x,y):=y/\sqrt{\Delta^2+4x^2y},
$$ 
so that $f_n(g)=F(g,n+1)$ and $\partial_g f_n(g)=2g \, G(g,n+1)$.  
Observe that for $y\geq 0$ one has 
\begin{align*}
\frac{\partial G}{\partial y} 
=\frac{1}{\sqrt{\Delta^2+4x^2y}}\left(1-\frac{2x^2y}{\Delta^2+4x^2y}\right) > 0
\end{align*}
and also
\begin{align*}
\frac{\partial^2 G}{\partial y^2}=
&\frac{4x^2}{(\Delta^2+4x^2y)^{3/2}}\left(-1+\frac{3y}{4}\frac{4x^2}{\Delta^2+4x^2y}\right)
\leq 0,
\end{align*}
and the latter is equal to $0$ if and only if $x=0$.
Then, one has (with an innocent abuse of notation concerning $\partial_n f_{n}(g)$)
\begin{align*}
\partial_g(f_{n+1}-f_n)(g)=&2g(G(g,n+2)-G(g,n+1))\quad\left\lbrace\begin{array}{rl}>0& \mbox{for } g>0\\<0& \mbox{for } g<0\end{array}\right. ; \\
\partial_n(f_{n+1}-f_n)(g)=&\frac{\partial F}{\partial y}(g,n+2)-\frac{\partial F}{\partial y}(g,n+1)\\
=&g^2\left(\frac{1}{\sqrt{\Delta^2+4g^2(n+2)}}-\frac{1}{\sqrt{\Delta^2+4g^2(n+1)}}\right)<0.
\end{align*}
The monotonicity properties claimed in the statement follow immediately. 

%
\end{proof}

\end{appendix}



\vspace{1cm}

{\footnotesize  

\begin{tabular}{ll}

(L. Pinna) & \textsc{Dipartimento di Matematica,  \virg{La Sapienza} Universit\`{a} di Roma} \\
 &  Piazzale Aldo Moro 2, 00185 Rome, Italy \\
 &  {E-mail address}: \href{mailto:pinna@mat.uniroma1.it}{\texttt{pinna@mat.uniroma1.it}} \\
 \\
(G. Panati) 
&  \textsc{Dipartimento di Matematica, \virg{La Sapienza} Universit\`{a} di Roma} \\
 &  Piazzale Aldo Moro 2, 00185 Rome, Italy \\
 &  {E-mail address}: \href{mailto:panati@mat.uniroma1.it}{\texttt{panati@mat.uniroma1.it}} \\

\end{tabular}

}
\end{document}